\newif\ifAnon
\newif\ifFull
\newif\ifDraft
\def\@begintheorem#1#2{\sl \trivlist \item[\hskip \labelsep{\bf #1\ #2:}]}
\def\@opargbegintheorem#1#2#3{\sl \trivlist
      \item[\hskip \labelsep{\bf #1\ #2\ #3:}]}
\newtheorem{theorem}{Theorem}
\newtheorem{lemma}[theorem]{Lemma}
\newenvironment{proof}{\noindent{\bf Proof:}}{\hspace*{\fill}\rule{6pt}{6pt}\bigskip} 
\newcommand{\ignore}[1]{}
\newcommand{\msgsize}{M}
\newcommand{\memconst}{b}
\newcommand{\memsize}{\memconst\msgsize}
\newcommand{\comments}[1]{\textsf{#1}}
\newcommand{\olya}[1]{\textbf{olya: #1}}
\newcommand{\comments}[1]{}
\newcommand{\olya}[1]{}
\begin{document}

\ifFull\else
\pagestyle{plain}
\def\thepage{\arabic{page}}
\fi

\ifFull
\title{Oblivious Storage with Low I/O Overhead}
\else
\title{Practical Oblivious Storage}
\fi

\ifAnon
\author{Anonymous submission to CODASPY 2012}
\else
\author{
{Michael T. Goodrich} \\
Dept.~of Computer Science \\ 
University of California, Irvine \\
\vspace{0.1in}
goodrich@ics.uci.edu\\
Michael Mitzenmacher \\
Dept.~of Computer Science \\ 
Harvard University \\
michaelm@eecs.harvard.edu\\
\and
Olga Ohrimenko \\
Dept.~of Computer Science \\
Brown University \\
\vspace{0.1in}
olya@cs.brown.edu \\
Roberto Tamassia \\
Dept.~of Computer Science \\
Brown University \\
rt@cs.brown.edu
}
\fi

\ifFull
\date{}
\else
\date{~\vspace*{-1in}}
\fi
\maketitle

\begin{abstract}
  We study \emph{oblivious storage} (OS), a natural way to model
  privacy-preserving data outsourcing where a client, Alice, stores
  sensitive data at an honest-but-curious server, Bob.  We show that
  Alice can hide both the content of her data and the pattern in which
  she accesses her data, with high probability, using a method that
  achieves $O(1)$ amortized rounds of communication between her and
  Bob for each data access.  We assume that Alice and Bob exchange
  small messages, of size $O(N^{1/c})$, for some constant $c\ge2$, in
  a single round, where $N$ is the size of the data set that Alice is
  storing with Bob.  We also assume that Alice has a private memory of
  size $2N^{1/c}$.  These assumptions model real-world cloud storage
  scenarios, where trade-offs occur between latency, bandwidth, and
  the size of the client's private memory.
\end{abstract}

\section{Introduction}
Outsourced data management is a large and growing industry.
For example, as of July 2011, Amazon S3~\cite{amazon-s3} reportedly
stores more than 400 billion objects, which is four times 
its size from the year before, and
the Windows Azure service~\cite{azure}, which was started in late
2008, is now a multi-billion dollar enterprise.

With the growing impact of online cloud storage technologies, there is
a corresponding growing interest in methods for privacy-preserving
access to outsourced data.  Namely, it is anticipated
that many customers of
cloud storage services will desire or require that their data remain private.
A necessary component of private data access, of course, is to
encrypt the objects being stored.  But information can be leaked from
the way that data is accessed, even if it is
encrypted (see, e.g.,~\cite{cwwz-sclwa-10}).  Thus, privacy-preserving data access
must involve both encryption and techniques for obfuscating the
patterns in which users access data.

\paragraph{Oblivious RAM Simulation}
One proposed approach to privacy-preserving data access involves
\emph{oblivious random access machine} (\emph{ORAM}) simulation~\cite{go-spsor-96}.
In this approach, the client, Alice, is modeled as a CPU with 
a limited-size cache that accesses a large
indexed memory managed by the owner of the data service, Bob.
The goal is for Alice to perform an arbitrary RAM computation while
completely obscuring from Bob the data items she accesses and the access pattern.
Unfortunately, although known ORAM simulations~\cite{%
a-orwca-10,%
dmn-psor-10,%
gm-paodor-11,%
go-spsor-96,%
gmot-orsew-11,%
pr-orr-20,%
shornbs-klo-11,%
sss-tpor-11,%
DBLP:conf/ndss/WilliamsS08}
can be adapted to the problem of privacy-preserving access to
outsourced data, they do not naturally match the interfaces provided
by existing cloud storage services, which are not organized according
to the RAM model (e.g., see~\cite{bmp-rosmor-11}).

\paragraph{Oblivious Storage}
A notable exception to this aspect of previous work on ORAM simulation
is a recent paper by  Boneh {\it et al.}~\cite{bmp-rosmor-11}, who
introduce the \emph{oblivious storage} (\emph{OS}) model.
In this model, the storage provided by Bob is viewed more
realistically as a collection of key-value pairs and the query 
and update operations supported by his API are
likewise more accurately viewed in 
terms of operations dealing with key-value pairs, which we 
also call \emph{items}.
An OS solution is \emph{oblivious} in this context if an honest-but-curious 
polynomial-time adversary is unable to distinguish between the (obfuscated)
versions of two possible access sequences of equal 
length and maximum set size,
which are polynomially related,
beyond a negligible probability.
Although the solution to the OS problem given by Boneh {\it et al.}~is 
somewhat complicated, it is
nevertheless considerably simpler than most 
of the existing ORAM solution techniques.
In particular, it avoids additional details required of ORAM simulations that 
must deal with the obfuscation of an arbitrary RAM algorithm.
Thus, an argument can be made that 
the OS approach is both more realistic and supports simpler oblivious
simulations.
The goal of this paper, then, is to explore further simplifications and
improvements 
to achieve practical solutions to the oblivious storage problem.

\subsection{Related Previous Work}
\begin{table*}
\ifFull
\small
\fi
\begin{center}
{
\renewcommand{\arraystretch}{1.15}
\ifFull
\begin{tabular}{>{\centering}m{2.77cm}|>{\centering}m{2.6cm}|>{\centering}m{2.6cm}|>{\centering}m{1.7cm}|c|c}
\multirow{2}{*}{Method} & \multicolumn{2}{c|}{Access Overhead} &  \multicolumn{1}{c|}{Message Size} &  \multicolumn{1}{c|}{Client } &   \multirow{2}{*}{Server Storage} \\
\cline{2-3}
 & Online & Amortized & ($M$)& Memory & \\
 \else
 \begin{tabular}{c|c|c|c|c|c}
 \multirow{2}{*}{Method} & \multicolumn{2}{c|}{Access Overhead} &  \multirow{2}{*}{Message Size ($M$)} &  \multirow{2}{*}{Client Memory} &   \multirow{2}{*}{Server Storage} \\
\cline{2-3}
 & Online & Amortized & &  & \\
 \fi
\hline
Shi {\it et al.}~\cite{scsl-orwcc-11} & $O(\log^3 N)$ & $O(\log^3 N)$ & $O(1)$ & $O(1)$ & $O(N\log N)$\\
Williams {\it et al.}~\cite{wsc-bcomp-08} & $O(\log N \log \log N)$ & $O(\log N \log \log N)$ & $O(N^{1/2})$ & $O(N^{1/2})$ & $O(N)$\\
Goodrich {\it et al.}~\cite{gmot-orsew-11} & $O(\log N)$ & $O(\log N)$ & $O(1)$ & $N^\nu$ & $O(N)$\\
Boneh {\it et al.}~\cite{bmp-rosmor-11} & $O(1)$ &$O(\log N)$ & $N^{1/2}$ & $O(N^{1/2})$& $N + 2N^{1/2}$
\tabularnewline
\hline 
Our Method $N^{1/2}$ & $O(1)$ & $O(1)$&  $N^{1/2}$ &  $2N^{1/2}$ & $N + N^{1/2}$\\
Our Method $N^{1/c}$ & $O(1)$ &$O(1)$ &  $N^{1/c}$ &  $cN^{1/c}$ & $N+2\sum_{i=1}^{c-2}N^{(c-i)/c}$\\
\end{tabular}
}
\end{center}
\caption{\label{tbl:results}%
  Comparison between selected oblivious storage approaches where
  online access overhead is the number of accesses required to retrieve the requested
  item.
 Here, $N$ denotes the number of items, and $0<\nu<1$ 
and $c\ge 2$ are arbitrary positive constants.
  The message size, client memory, and server 
storage are measured in terms of the number of items.
Also, we note that the constant factor in the $O(1)$ access
overhead for our $N^{1/c}$ inductive method depends on 
the constant $c$.
 }
\end{table*}
Research on oblivious simulation of one computational model by another began 
with Pippenger and Fischer~\cite{pf-racm-79}, who show that one can
obliviously simulate a computation of a 
one-tape Turing machine computation of length $N$ with
an two-tape Turing machine computation of length $O(N\log N)$. 
That is, they show how to perform such an oblivious simulation 
with a computational overhead that is $O(\log N)$.

Goldreich and Ostrovsky~\cite{go-spsor-96} show that one can perform an 
oblivious RAM (ORAM) simulation using an outsourced data
server and they prove a lower bound implying
that such simulations require an overhead 
of at least $\Omega(\log N)$, for a RAM memory of size $N$,
under some reasonable assumptions about the nature of such simulations.
For the case where Alice has only a constant-size private memory,
they show how Alice can easily achieve an overhead of $O(N^{1/2}\log N)$,
using a scheme called the ``square-root solution,''
with $O(N)$ storage at Bob's server. 
With a more complicated scheme, 
they also show how Alice can achieve an overhead of
$O(\log^3 N)$ with $O(N\log N)$ storage at Bob's server, using a scheme
called the ``hierarchical solution.''

Williams and Sion~\cite{DBLP:conf/ndss/WilliamsS08}
provide an ORAM simulation for the case when the data
owner, Alice, has a private memory of size $O({N}^{1/2})$.
They achieve an expected amortized time overhead of $O(\log^2 N)$ using
$O(N\log N)$ memory at the external data provider, Bob.
Additionally, Williams {\it et al.}~\cite{wsc-bcomp-08} claim a
result that uses an $O({N}^{1/2})$-sized private memory
and achieves $O(\log N\log\log N)$ amortized time overhead with
a linear-sized outsourced storage, but some
researchers (e.g., see~\cite{pr-orr-20}) 
have raised concerns with the assumptions and analysis of
this result.
Likewise,
Pinkas and Reinman~\cite{pr-orr-20} published an ORAM simulation
result for the case where Alice maintains a 
constant-size private memory, claiming that Alice can achieve
an expected amortized
overhead of $O(\log^2 N)$ while using $O(N)$ storage space,
but Kushilevitz {\it et al.}~\cite{shornbs-klo-11} 
have raised correctness issues with this result as
well~\cite{shornbs-klo-11}.
Goodrich and Mitzenmacher~\cite{gm-paodor-11} 
show that one can achieve an overhead of
$O(\log^2 N)$ in an ORAM simulation, with high probability,
for a client with constant-sized
local memory, and $O(\log N)$, for a client with $O(N^\epsilon)$
memory, for a constant $\epsilon>0$.
Kushilevitz {\it et al.}~\cite{shornbs-klo-11} 
also show that one can achieve
an overhead of 
$O(\log^2 N/\log\log N)$ in an ORAM simulation, with high probability,
for a client with constant-sized local memory.
Ajtai~\cite{a-orwca-10} proves that ORAM
simulation can be done with polylogarithmic overhead 
without cryptographic assumptions about the existence of
random hash functions, as is done in the papers mentioned above
(and this paper), and
a similar result is given by Damg\aa{}rd {\it et al.}~\cite{dmn-psor-10}.

The importance of privacy protection in outsourced data management 
naturally raises
the question of the practicality of the previous ORAM solutions.
Unfortunately, the above-mentioned theoretical results contain several
complications and hidden constant factors that make these solutions less than
ideal for real-world use.
Stefanov {\it et al.}~\cite{sss-tpor-11}
study the ORAM simulation problem from a practical point of view,
with the goal of reducing the worst-case bounds for data accesses.
They show that one can achieve an amortized overhead of $O(\log N)$ 
and worst-case performance $O(N^{1/2})$,
with $O(\epsilon N)$ storage on the client, for a constant $0<\epsilon<1$,
and an amortized overhead of $O(\log^2 N)$ and similar worst-case
performance,
with a client-side storage of
$O(N^{1/2})$, both of which have been hidden constant factors than previous
ORAM solutions.
Goodrich {\it et al.}~\cite{gmot-orsew-11}
similarly study methods for improving the worst-case performance of ORAM
simulation, showing that one can achieve a worst-case overhead of $O(\log N)$
with a client-side memory of size $O(N^\epsilon)$, for any constant
$\epsilon>0$.

As mentioned above,
Boneh {\it et al.}~\cite{bmp-rosmor-11}
introduce the \emph{oblivious storage} (OS) problem and argue how it is 
more realistic and natural than the ORAM simulation problem.
They study methods that separate access overheads and the overheads needed for 
rebuilding the data structures on the server, providing, for example, $O(1)$
amortized overhead for accesses with $O((N\log N)^{1/2})$ overhead for
rebuilding operations, assuming a similar bound for the 
size of the private memory on the client.

\subsection{Our Results}
In this paper, we study the oblivious storage (OS) problem, providing 
solutions that are parameterized by the two critical components
of an outsourced storage system:
\begin{itemize}
\item
$N$: the number of items that are stored at the server
\item
$M$: the maximum 
number of items that can be sent or received in a single message,
which we refer to as the \emph{message size}.
\end{itemize}

We assume that the objects being outsourced to Bob's cloud storage are
all of the same size, since this is a requirement to achieve oblivious
access. Thus, we can simply refer to the memory and message sizes in
terms of the number of items that are stored.
This notation is borrowed from the literature on external-memory algorithms
(e.g., see~\cite{DBLP:reference/algo/Vitter08}), 
since it closely models the
scenario where the memory needed by a computation exceeds its local capacity
so that external storage is needed.
In keeping with this analogy to external-memory algorithms, we refer to each
message that is exchanged between Alice and Bob as an \emph{I/O}, 
each of which, as noted above, is of size at most~$M$.
We additionally assume that Alice's memory is of size at least~$2M$, 
so that she can hold two messages in her local memory.
In our case, however, we additionally assume that $M \ge N^{1/c}$,
for some constant~$c\ge 2$.
This assumption is made for the sake of realism, since even with 
$c=3$, we can model Bob 
storing exabytes for Alice, while she and he exchange 
individual messages measured in megabytes.
Thus, we analyze our solutions in terms of the constant
\[
c=\log_M N .
\]
We give practical solutions to the oblivious storage problem that
achieve an efficient amortized number of I/Os exchanged between Alice
and Bob in order to perform \textsf{put} and \textsf{get} operations.

We first present a simple ``square-root'' solution, which assumes that $M$ is
$N^{1/2}$, so~$c=2$. 
This solution is not oblivious, however, if the client
requests items that are not in the set. 
So we show how to convert any 
oblivious storage solution that cannot tolerate requests for missing items to a solution that can support obliviously also such requests.
With these tools in hand, we then show how to define an inductive solution to
the oblivious storage problem that achieves a constant amortized number
of I/Os for each access, assuming $M = N^{1/c}$.
We believe that $c=2$, $c=3$, and $c=4$
are reasonable choices in practice, depending on the relative sizes of $M$ and $N$.

The operations in these solutions are factored into \emph{access
  operations} and \emph{rebuild operations}, as in the approach
advocated by Boneh {\it et al.}~\cite{bmp-rosmor-11}.  Access
operations simply read or write individual items to/from Bob's storage
and are needed to retrieve the requested item, whereas rebuild
operations may additionally restructure the contents of Bob's storage
so as to mask Alice's access patterns.  In our solutions, access
operations use messages of size $O(1)$ while messages of size $M$
are used only for rebuild operations.


An important ingredient in all oblivious storage and oblivious RAM
solutions is a method to obliviously ``shuffle'' a set of elements so
that Bob cannot correlate the location of an element before the
shuffle with that after the shuffle.  This is usually done by using an
oblivious sorting algorithm, and our methods can utilize such an
approach, such as the external-memory oblivious sorting algorithm of
Goodrich and Mitzenmacher~\cite{gm-paodor-11}.

In this paper, we also introduce a new simple shuffling method,
which we call the \emph{buffer shuffle}.
We show that this method can shuffle with high probability with very little
information leakage, which is likely to be sufficient in practice in most
real-world oblivious storage scenarios.
Of course, if perfectly oblivious shuffling is desired, then this shuffle
method can be replaced by external-memory sorting, which increases the I/O
complexity of our results by at most a constant factor (which depends
on~$c$).

In Table~\ref{tbl:results}, we summarize our results and compare the
main performance measures of our solutions with those of selected
previous methods that claim to be practical. 

\subsection{Organization of the Paper}

The rest of this paper is organized as follows.  In
Section~\ref{sec:oblivious}, we overview the oblivious storage model
and its security properties and describe some basic techniques used in
previous work. Our buffer shuffle method is presented and analyzed in
Section~\ref{sec:buffer-shuffle}. We give a preliminary
miss-intolerant square-root solution in
Section~\ref{sec:square}. Section~\ref{sec:miss-tolerance} derives a
miss-tolerant solution from our square-root solution using a cuckoo
hashing scheme. In Section~\ref{sec:induction}, we show how to reduce
the storage requirement at the client.  Finally, in
Section~\ref{sec:performance}, we describe our experimental results
and provide estimates of the actual time overhead and monetary cost of
our method, obtained by a prototype implementation and simulation of
the use of our solution on the Amazon S3 storage service.


\section{The Oblivious Storage Model}
\label{sec:oblivious}
In this section, we discuss the OS model using the formalism of Boneh
{\it et al.}~\cite{bmp-rosmor-11}, albeit with some minor
modifications.  As mentioned above, one of the main differences
between the OS model and the classic ORAM model is that the storage
unit in the OS model is an \emph{item} consisting of a key-value
pair. Thus, we measure the size of messages and of the storage space
at the client and server in terms of the number of items

\subsection{Operations and Messages}
Let $S$ be the set of data items.  The server supports the following
operations on~$S$.
\begin{itemize}
\item \textsf{get}$(k)$: if $S$ contains an item, $(k,v)$, with key
  $k$, then return the value, $v$, of this item, else
  return~\textsf{null}.
\item \textsf{put}$(k,v)$: if $S$ contains an item, $(k,w)$, with key
  $k$, then replace the value of this item with $v$, else add to $S$ a
  new item~$(k,v)$.
\item \textsf{remove}$(k)$: if $S$ contains an item, $(k,v)$, with key
  $k$, then delete from $S$ this item and return its value $v$, else
  return~\textsf{null}.
\item \textsf{getRange}$(k_1,k_2,m)$: return the first $m$ items (by
  key order) in $S$ with keys in the range $[k_1, k_2]$.  Parameter $m$
  is a cut-off to avoid data overload at the client because of an
  error.  If there are fewer than $m$ such items, then all the items with
  keys in the range $[k_1,k_2]$ are returned.
\item \textsf{removeRange}$(k_1,k_2)$: remove from $S$ all items with
  keys in the range $[k_1,k_2]$. 
\end{itemize}

The interactions between the client, Alice, and the server, Bob, are
implemented with messages, each of which is of size at most
$\msgsize$, i.e., it contains at most $\msgsize$ items.
Thus, Alice can send Bob a single message consisting
of $\msgsize$ \textsf{put} operations, each of which adds a single item.
Such a message would count as a single I/O.
Likewise, the response to a \textsf{getRange}$(k_1,k_2,m)$ operation
requires $O(\lceil m/\msgsize\rceil)$ I/Os; 
hence, Alice may wish to limit $m$ to be
$O(\msgsize)$. 
Certainly, Alice would want to limit $m$ to be $O(M)$ in most cases,
since she would otherwise
be unable to locally store the entire result of such a query if it
reaches its cut-off size.

As mentioned above, our use of parameter $\msgsize$ is done for the sake of
practicality, since it is unreasonable to assume that Alice and Bob can only
communicate via constant-sized messages.
Indeed, with network connections measured in gigabits per second but with
latencies measured in milliseconds, the number of rounds of communication is
likely to be the bottleneck, not bandwidth.
Thus, because of this orders-of-magnitude difference
between bandwidth and latency, we assume
\[
\msgsize \ge N^{1/c},
\]
for some fixed constant $c\ge 2$, but that Alice's memory is smaller than~$N$.
Equivalently, we assume that $c=\log_M N$ is a constant.
For instance, as highlighted above, if Bob's memory is measured 
in exabytes and we take $c=3$, then we are
reasonably assuming that Alice and Bob can exchange messages whose sizes are
measured in megabytes.
To assume otherwise would be akin to trying to manage a 
large reservoir with a pipe the size of a drinking straw.

We additionally
assume that Alice has a private memory of size $\memsize$, in which she can 
perform computations that are hidden from the server, Bob.
To motivate the need for Alice outsourcing her data, while 
also allowing her to
communicate effectively with Bob, we assume that $b\ge 2$
and $2\msgsize < N$.

\subsection{Basic Techniques}
Our solution employs several standard techniques previously introduced
in the oblivious RAM and oblivious storage literature.  To prevent Bob
from learning the original keys and values and to make it hard for Bob
to associate subsequent access to the same item, Alice replaces the
original key, $k$, of an item with a new key $k'=h(r || k)$, where $h$
is a cryptographic hash function (i.e., one-way and
collision-resistant) and $r$ is a secret randomly-generated nonce that
is periodically changed by Alice so that a subsequent access to the
same item uses a different key. Note that Bob learns the modified keys
of the items. However, he cannot derive from them the original keys
due to the one-way property of the cryptographic hash function
used. Also, the uniqueness of the new keys occurs with overwhelming
probability due to collision resistance.

Likewise, before storing an item's value, $v$, with Bob, Alice
encrypts $v$ using a probabilistic encryption scheme. E.g., the
ciphertext is computed as $E(r || v)$, where $E$ is a deterministic
encryption algorithm and $r$ is a random nonce that gets discarded
after decryption. Thus, a different ciphertext for $v$ is generated
each time the item is stored with Bob. As a consequence, Bob cannot
determine whether $v$ was modified and cannot track an item by its
value.  The above obfuscation capabilities are intended to make it
difficult for Bob to correlate the items stored in his memory at
different times and locations, as well as make it difficult for Bob to
determine the contents of any value.

We distinguish two types of OS solutions.  We say that an oblivious
storage solution is \emph{miss-intolerant} if it does not allow for
\textsf{get} requests that return \texttt{null}. Thus, Alice must know
in advance that Bob holds an item with the given key.  In applications
that by design avoid requests for missing items, this restriction
allows us to design an efficient oblivious-storage solution, since we
don't have to worry about any information leakage that comes from
queries for missing keys.  Alternatively, if an oblivious storage
solution is oblivious even when accesses can be made to keys that are
not in the set, then we say that the solution is
\emph{miss-tolerant}.

\subsection{Security Properties}


Our OS solution is designed to satisfy the following security
properties, where the adversary refers to Bob (the server) or a third
party that eavesdrops the communication between Alice (the client) and
Bob. The adversary is assumed to have polynomially bounded
computational power.

\begin{description}
\item[Confidentiality.] Except with negligible probability, the
  adversary should be unable to determine the contents (key or value)
  of any item stored at the server. This property is assured by the
  techniques described in the previous subsection.

\item[Hardness of Correlation.]  
  Except with negligible or very low probability beyond $1/2$,
  the adversary should be unable to
  distinguish between any two possible access
  sequences of equal length and maximum set size.
  That is, consider two possible access sequences, $\sigma_1$
  and $\sigma_2$, that consist of $L$ operations, \textsf{get}, 
  \textsf{put}, and \textsf{remove},
  that could be made by Alice, on a set of size up to $N$, where
  $L$ is polynomial in $N$.
  Then an oblivious storage (OS) solution has \emph{correlation hardness}
  if it applies an obfuscating 
  transformation so that, after seeing the sequence of I/Os 
  performed by such a transformation, the probability that
  Bob can correctly guess whether Alice has performed
  (the transformed version of)
  $\sigma_1$ or $\sigma_2$ is
  more than $1/2$ by at most a $1/N^\alpha$
  or a negligible amount, 
  depending on the degree of obfuscation desired, 
  where $\alpha>1$ is a constant.\footnote{We assume $L << N^\alpha$
  in this case.}
\end{description}

Note that $N$ is used in the definition of ``correlation hardness''
in both the upper bound 
on the size of Alice's set and also in the probability of 
Bob correctly
distinguishing between two of her possible access sequences.
Of course,
the efficiency of an OS solution should also to be measured in terms of $N$.

\section{The Buffer Shuffle Method}
\label{sec:buffer-shuffle}

One of the key techniques in our solutions is the use of 
oblivious shuffling.
The input to any \emph{shuffle}
operation is a set, $A$, of $N$ items.  
Because of the inclusion of the \textsf{getRange} operation in the server's
API, we can view the items in $A$ as being ordered by their keys.
Moreover, this functionality also allows us to
access a contiguous run of $M$ such items, starting from a given key.
The output of a shuffle
is a reordering of the items in $A$ with replacement keys,
so that all permutations are equally likely.
During a shuffle, the server, Bob, can observe Alice read (and remove) $M$
of the items he is storing for her, and then write back $M$ more items, which
provides some degree of obfuscation of how the items in these
read and write groups are correlated.  An additional desire for the output of a shuffle is that,
for any item $x$ in the input, the adversary should be able to correlate $x$ with
any item $y$ in the output only with probability that is very close to $1/N$
(which is what he would get from a random guess).


During such a shuffle, we assume that Alice is wrapping each 
of her key-value pairs, $(k,v)$, as $(k',(k,v))$, where $k'$ 
is the new key that is chosen to obfuscate $k$.
Indeed, it is likely that in each round of communication that Alice makes
she will take a wrapped (input) pair, $(k', X)$,
and map it to a new (output) pair, $(k'',X')$, where the $X'$ is assumed to be 
a re-encryption of $X$.
The challenge is to define an encoding strategy for the $k'$ and $k''$
wrapper keys so that it is difficult for 
the adversary to correlate inputs and outputs.

\subsection{Theoretical Choice: Oblivious Sorting}
One way to do this is to assign each item a random key 
from a very large universe, which is separate and
distinct from the key that is a part of this key-value pair, and
obliviously sort~\cite{gm-paodor-11} the items by these keys. 
That is, we can wrap each key-value pair, $(k,v)$, as $(k',(k,v))$, where $k'$ 
is the new random key, and then wrap these wrapped pairs in a way that allows
us to implement an oblivious sorting algorithm in the OS model based on
comparisons involving the $k'$ keys.
Specifically,
during this sorting process, we would further wrap each 
wrapped item, $(k',(k,v))$, as $(\alpha,(k',(k,v)))$, 
where $\alpha$ is an address or
index used in the oblivious sorting algorithm.
So as to distinguish such keys even further, Alice can also add a prefix to each
such $\alpha$, such as ``\texttt{Addr:}'' or ``\texttt{Addr}$i$:'', where $i$
is a counter (which could, for instance, be counting the steps in Alice's
sorting algorithm).
Using such addresses as ``keys'' allows Alice to consider Bob's storage as if
it were an array or the memory of a RAM.
She can then use this scheme to simulate an oblivious sorting algorithm.

If the 
randomly assigned keys are distinct, which will occur with 
very high probability, then
this achieves the desired goal. And 
even if the new keys are not distinct, we can
repeat this operation until we get a set of distinct new keys without 
revealing any data-dependent information to the server.

From a theoretical
perspective, it is hard to beat this solution. It is well-known, 
for instance, that shuffling by sorting items via randomly-assigned keys
generates a random permutation such that all permutations are
equally likely (e.g., see~\cite{k-sa-98}).
In addition, since the means to go from the input to the output is
data-oblivious with respect to the I/Os (simulated using the address keys), 
the server who is watching the inputs and
outputs cannot correlate any set of values. 
That is, independent of the set of I/Os,
any input item, $x$, at the beginning of the sort can be 
mapped to any output item, $y$,
at the end.
Thus,
for any item $x$ in the input, the adversary can correlate $x$ with
any item $y$ in the output with probability exactly $1/N$.
Finally, we can use 
the external-memory deterministic oblivious-sorting algorithm
of Goodrich and Mitzenmacher~\cite{gm-paodor-11}, for instance,
so as to use messages of size $O(M^{1/2})$, which will result in an algorithm
that sorts in $O((N/M)\log_{\sqrt{M}} (N/M))=O((N/M)c^2)$ I/Os.
That is, such a sorting algorithm uses a constant amortized number
of I/Os per item.


But using an oblivious sorting algorithm
requires a fairly costly overhead, as the constant
factors and details of this algorithm are somewhat nontrivial.
Thus,
it would be nice in applications that don't necessarily require perfectly
oblivious shuffling to have a simple substitute that could be fast and
effective in practice.

\subsection{The Buffer Shuffle Algorithm}
So, ideally, we would like a different oblivious shuffle algorithm,
whose goal is still to obliviously permute the collection, $A$, of $N$
values, but with a simpler algorithm. 
The \emph{buffer-shuffle} algorithm is such an alternative:
\begin{enumerate}
\item
Perform a scan of $A$, $M$ numbers at a time.
With each step, we
read in $M$ wrapped items from $A$,
each of the form $(k',(k,v))$,
and randomly permute them in Alice's local memory.
\item
We then generate a new random key, $k''$, for each such wrapped
item, $(k',(k,v))$, in this group,
and we output all those new key-value pairs back to the server. 
\item
We then repeat this operation
with the next $M$ numbers, and so on, until Alice has made a pass
through all the numbers in $A$. 
\end{enumerate}
Call this a single \emph{pass}.
After such a pass, we can view the new keys as being sorted at the 
server (as observed above, by the properties of the OS model).
Thus, we can perform
another pass over these new key-value
pairs, generating an even newer set of wrapped key-value pairs. (This
functionality is supported by range queries, for example, so there is
little overhead for the client in implementing each such pass.) 
Finally, we
repeat this process for some constant, $b$, times, which is established
in our analysis below. 
This is the buffer-shuffle algorithm.

\subsection{Buffer-Shuffle Analysis}
To analyze the buffer-shuffle algorithm, we first focus on the
following goal: we show that with probability $1-o(1)$ that after four
passes, one cannot guess the location of an initial key-value pair
with probability greater than ${1/N} + o(1/N)$, assuming $M=N^{1/3}$,
where $N$ is the number of items being permuted.  After we prove this,
we discuss how the proof extends to obtain improved probabilities of
success and tighter bounds on the probability of tracking so that they
are closer to $1/N$, as well as how to extend to cases where
$M=N^{1/k}$ for integers $k \geq 3$.  

We think of the keys at the beginning of each pass as being in
key-sorted order, in $N^{2/3}$ groups of size $N^{1/3}$.  Let
$P_{i,j}$ be the perceived probability that after $i$ passes the key
we are tracking is in group $j$, according to the view of the tracker,
Bob.  Note that Bob can see, for each group on each pass, the set of
keys that correspond to that group at the beginning and end of the
pass, and use that to compute values $P_{i,j}$ corresponding to their
perceived probabilities.  Without loss of generality, we consider
tracking the first key, so $P_{0,1} = 1$.

Our goal will be to show that $P_{i,j} = N^{-2/3}+o(N^{-2/3})$,
for $i=3$ and for all $j$, conditioned on some events regarding
the random assignment of keys at each pass.  The events we condition
on will hold with probability $1-o(1)$.  This yields that the key
being tracked appears to a tracker to be (up to lower order terms) in
a group chosen uniformly at random. As the key values in each group
are randomized at the next pass, this will leave the tracker with
a probability only ${1/N} + o({1/N})$ of guessing the item,
again assuming the bad $o(1)$ events do not occur.

Let $X_{i,k,j}$ be the number of keys that go from group $k$ to group $j$ in pass $i$.
One can quickly check that $X_{i,k,j}$ is 0 with probability near 1.
Indeed, the probability that $X_{i,k,j} = c$ is bounded above by 
$${N^{1/3} \choose c} \left(N^{-2/3} \right )^c = O\left (N^{-c/3}\right ).$$

We have the recurrence
$$P_{i,j} = \sum_k P_{i-1,k} X_{i,k,j}/N^{1/3}.$$
The explanation for this recurrence is straightforward.
The probability the key being tracked is in the $j$th group in after pass $i$
is the sum over all groups $k$ of 
the probability the key was in group $k$, given by $P_{i-1,k}$, times
the probability the corresponding new key was mapped to group $j$, which 
$X_{i,k,j}/N^{1/3}$.

Our goal now is to show that over successive passes that as long as the values
$X_{i,k,j}$ behave nicely, the $P_{i,j}$ will quickly converge to roughly $N^{-2/3}$.  
We sketch an argument that with probability $1-o(1)$ and then comment on how the
$o(1)$ term can be reduced to any inverse polynomial probability in a constant number
of passes.  Our main approach is to note that bounding the $X_{i,k,j}$ corresponds
to a type of balls and bins problem, in which case negative dependence can be applied
to get a suitable concentration result via a basic Chernoff bound.  

\begin{theorem}  When $M = N^{1/3}$, after four passes,   
Bob cannot guess the location of an initial key-value pair
with probability greater than ${1/N} + o(1/N)$.
\end{theorem}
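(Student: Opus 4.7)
The plan is to show that after three passes the group-level posterior $P_{3,j}$ is $(1+o(1))N^{-2/3}$ uniformly across all $N^{2/3}$ groups, and then observe that the fourth pass spreads each group's probability mass uniformly across the $N^{1/3}$ items it contains, yielding a per-position posterior of at most $1/N+o(1/N)$, up to the $o(1)$ failure probability of the concentration analysis.

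First, I would set up a high-probability concentration event on the random key assignments. In each pass $i$ and for each source batch $k$, independently assigning fresh keys to the $N^{1/3}$ items of the batch places them in the $N^{2/3}$ destination groups with uniform probability $N^{-2/3}$, so the vector $(X_{i,k,j})_j$ is multinomial with negatively associated components. A Chernoff-type bound combined with a union bound over the $O(N^{4/3})$ pairs $(k,j)$ and the four passes then lets us condition on an event $E$ of probability $1-o(1)$ on which every $X_{i,k,j}$ is $O(\log N)$ and the weighted sums appearing in the recurrence concentrate around their expectations.

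Next, I would propagate the recurrence $P_{i,j}=\sum_k P_{i-1,k}X_{i,k,j}/N^{1/3}$ twice. Conditional on the outcome of pass 1, the variables $X_{2,l,k}$ for fixed $k$ are independent across $l$, so a Bernstein/Chernoff bound on $P_{2,k}$ followed by a union over the $N^{2/3}$ target groups gives $\max_k P_{2,k}=\tilde O(N^{-2/3})$. Plugging this into a second such bound on $P_{3,j}$ conditional on pass 2, and union-bounding once more, yields $P_{3,j}=(1+o(1))N^{-2/3}$ uniformly in $j$. The per-item reduction is then immediate: Alice's pass-4 permutation makes every item of a source batch equally likely to occupy any output position that came from that batch, so the posterior that a specific final position $p$ holds the tracked key is $P_{3,k(p)}/N^{1/3}\le (1+o(1))/N$, where $k(p)$ is the source batch of $p$ in pass 4.

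The delicate part is the tight concentration for $P_{3,j}$. A direct second-moment estimate only shows that $P_{2,k}$ has standard deviation of the same order as its mean $N^{-2/3}$, so $P_{2,k}$ itself is not concentrated to within a $1+o(1)$ factor. The workaround is to settle for a polylogarithmic-factor upper bound on $\max_k P_{2,k}$ and then exploit the further averaging across $N^{2/3}$ independent contributions in the pass-3 sum---which contributes an extra $N^{-1}$ factor to the variance---to recover true $(1+o(1))$ concentration of $P_{3,j}$ after paying both nested union bounds. This also explains the choice of exactly four passes: three to ``smooth'' the group distribution down to $(1+o(1))N^{-2/3}$, and a fourth to translate this near-uniformity at the group level into near-uniformity at the per-item level.
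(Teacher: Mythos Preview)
Your proposal is correct and follows essentially the same route as the paper: track the group-level posterior via the recurrence $P_{i,j}=\sum_k P_{i-1,k}X_{i,k,j}/N^{1/3}$, establish only a polylogarithmic upper bound on $\max_k P_{2,k}$ after two passes, use a Chernoff/Bernstein argument in pass~3 to get $P_{3,j}=(1+o(1))N^{-2/3}$, and then let pass~4 turn group-level near-uniformity into the per-item bound $1/N+o(1/N)$. The paper organizes the same argument pass-by-pass, explicitly counting how many groups carry positive weight after each step and invoking negative association to lower-bound that count; your Bernstein formulation needs only the upper bound $\max_k P_{2,k}=\tilde O(N^{-2/3})$, which is a mild streamlining but not a different idea.
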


\begin{proof}
We consider passes in succession.

\begin{itemize}
\item Pass 1: It is easy to check that with probability $1-o(1)$
(using just union bounds and the binomial distribution to bound the
number of keys from group 1 that land in every other group) there are
at most $c \log N$  groups for which $X_{1,1,j} = 2$ and 0 groups for
which $X_{1,1,j} = 3$.  There are therefore $N^{1/3} - O(\log N)$
groups for which $P_{1,j} = N^{-1/3}$ and $O(\log N)$ groups for which
$P_{1,j} = 2N^{-1/3}$.
\item Pass 2: Our interpretation here (and going forward) is that each
key in group $j$ after pass $i-1$ has a "weight" $P_{i-1,j}/N^{1/3}$ that
it gives to the group it lands in in pass $i$;  the sum of weights in
a group then yields $P_{i,j}$.  

With this interpretation, with probability $1-o(1)$, there are
$N^{2/3} -o(N^{2/3})$ keys at the end of pass 1 with positive weight
(of either $N^{-2/3}$ or $2N^{-2/3}$).  These keys are rerandomized,
so at the end of pass 2, the number of keys with positive with in a
given bucket $j$ is expected to be constant, and again simple binomial
and union bounds imply it the maximum number of keys with positive
weight in any bucket is at most $c \log N$ with probability $1-o(1)$.
Indeed, one can further show at the end of pass 2 that the number of
groups $j$ with $P_{2,j} > 0$ must be at least $\Omega(N^{2/3})$ with
probability $1-o(1)$; this follows from the fact that, for example, if
$Y_j$ is the 0/1 random variable that represents whether a group $j$
received at least one weighted key, then $E[Y_j] > 1/2$, and the $Y_j$
are negatively associated, so Chernoff bounds apply.  (See, for
example, Chapter 3 of \cite{DubhashiPanconesi}.)  
\item Pass 3:  Conditioned on the $1-o(1)$ events from the first two passes,
at the end of the second pass there are $\Omega(N)$ keys with positive 
weight going into pass 3, and the possible weight values for each key are bounded by 
$(c' \log N)/N$ for some constant $c'$.  The expected weight for each group after pass
3 is obviously $N^{-2/3}$.  The weight of the keys within a group are negatively associated,
so we can apply a Chernoff bound to the weight associated with each group, noting that to
apply the Chernoff bound we should re-scale so the range of the weights is $[0,1]$.  
Consider the first group, and let $Z_i$ be the weight of the $i$th keys
in the first group (scaled by multiplying the weight by $N/(c' \log N)$).  Let $Z =\sum_{i=1}^{N^{2/3}} Z_i$.
Then
$$\Pr(|Z - N^{1/3}/(c' \log N)| \geq N^{1/5}| \leq e^{-2N^{1/15}}.$$
Or, rescaling back, the weight in the first group is within $N^{-4/5}/(c' \log N)$ of
$N^{-2/3}$ with high probability, and a union bound suffices to show that this the same
for all groups.  
\item Pass 4:  After pass 3, with probability $1-o(1)$ each key has weight $1/N + o(1/N)$,
and so after randomizing, assuming the events of probability $1-o(1)$ all hold, the probability
that any key is the original one being tracked from Bob's point of view is $1/N + o(1/N)$.
\end{itemize}
\end{proof}

\paragraph{Extending the argument}
We remark that the $o(1)$ failure probability can be reduced to any inverse polynomial 
by a combination of choosing constant $c$ and $c'$ to be sufficiently high, and/or repeating
passes a (constant) number of times to reduce the probability of the bad events.  (For example,
if pass 1, fails with probability $p$, repeating it $a$ times reduces the failure probability to $p^a$;
the $o(1)$ failure probabilities are all inverse polynomial in $N$ in the proof above.)

Similarly, one can ensure that the probability that any key is the
tracked key to $1/N + o(1/N^a)$ for any constant $a$ by increasing the
number of passes further, but still keeping the number of passes
constant.  Specifically, note that we have shown that after the first
four passes, with high probability the weight of each key bounded
between $N^{-1} - N^{-j}$ and $N^{-1} + N^{-j}$ for some $j>1$, and
the total key weight is 1.  We re-center the weights around $N^{-1} -
N^{-j}$ and multiply them by $N^{j-1}$; now the new reweighted weights
sum to 1.  We can now re-apply above the argument; after four passes
we know that reweighted weights for each key will again be $N^{-1} -
N^{-j}$ and $N^{-1} + N^{-j}$.  Undoing the rescaling, this means the
weights for the keys are now bounded $N^{-1} - N^{-j}$ and $N^{-1} +
N^{1-2j}$, and we can continue in this fashion to obtain the desired closeness
to $1/N$.  

Finally, we note that the assumption that we can read in $N^{1/3}$
key-value pairs is and assign them new random key values can be reduced
to $N^{1/j}$ pairs for any $j \geq 3$.  
We sketch the proof.  Each step, as in the original proof, holds
with probability $1-o(1)$.  

In this case we have $N^{(j-1)/j}$ groups.
In the first pass, the weight from the shuffling is
spread to $\Omega(N^{2/j})$ key-value pairs, following the same reasoning as for
Pass 1 above.  Indeed, we can continue this argument;  in the next pass,
there weight will spread to $\Omega(N^{3/j})$ key-value pairs, and so on,
until after $j-2$ passes there are
$\Omega(N^{(j-1)/j})$ keys with non-zero weight with high probability,
with one small modification in the analysis:  at each pass, we can ensure that 
each group has less than $j$ weighted keys with high probability.  

Then, following the same argument as in Pass 2 above, one can show
that after the following pass $\Omega(N)$ keys have non-zero weight,
and the maximum weight is bounded above by $(c' \log N)/N$ for some
constant $c'$.  Applying the Chernoff bound argument for Pass 3 above 
to the next pass we find that the weight within each of the $N^{(j-1)/j}$
groups is equal to $1/N + o(1/N)$ after this pass, and again this suffices
by the recurrence to show that at most one more pass is necessary for each 
key-value pair to have weight $1/N + o(1/N)$.

\section{A Square Root Solution}
\label{sec:square}
As is a common practice in ORAM simulation papers, starting with the work of 
Goldreich and Ostrovsky~\cite{go-spsor-96}, before we give our 
more sophisticated
solutions to the oblivious storage problem, we first give a simple
\emph{square-root} solution. 
Our general solution is an inductive extension of this solution,
so the square-root also serves to form a basis for this induction.

In this square-root solution, we assume $M\ge N^{1/2}$.
Thus, Alice has a local memory of size at 
least $N^{1/2}$, and she and Bob can
exchange a message of size up to at least $N^{1/2}$ in a single I/O.
In addition, we assume that this solution provides an API for performing
oblivious dictionary operations where every 
\textsf{get}$(k)$ or \textsf{put}$(k,v)$ operation is guaranteed to 
be for a key $k$ that is contained in the set, $S$, 
that Alice is outsourcing to Bob.
That is, we give a miss-intolerant solution to the oblivious storage problem.

Our solution is based on the observation that we can view 
Alice's internal memory as a miss-tolerant solution to the OS problem.
That is,
Alice can store $O(M)$ items in her private 
memory in some dictionary data structure,
and each time she queries her memory for a key $k$ 
she can determine if $k$ is present without
leaking any data-dependent information
to Bob.

\subsection{The Construction}
Let us assume we have a miss-tolerant dictionary, $D(N)$,
that provides a solution to the OS problem that works
for sets up to size $N$, with at most $O(1)$ amortized number of I/Os 
of size at most $N$ per access.
Certainly, a dictionary stored in Alice's internal memory suffices
for this purpose (and it, in fact, doesn't even need any I/Os per access),
for the case when $N$ is at most $M$, the size of Alice's internal memory.

The memory organization of our solution, $B(N)$,
we describe here,
consists of two caches:
\begin{itemize}
\item
A cache, $C_0$, which is of size $M$ and is implemented using 
an instance of a $D(M)$ solution.
\item
A cache, $C_1$, which is of size $N+M$, which is stored 
as a dictionary of key-value pairs using Bob's storage.
\end{itemize}

The extra $M$ space in $C_1$ is for storing $M$ 
``dummy'' items, which have keys indexed 
from a range that is outside of the universe used for $S$, 
which we denote as
$-1,-2,\ldots,-M$.
Let $S'$ denote the set of $N$ items from $S$, plus items 
with these dummy keys (along with null values), minus any items in $C_0$.
Initially, $C_0$ is empty and $C_1$ stores the entire set $S$ plus the $M$
items with dummy keys.
For the sake of obliviousness,
each item, $(k,v)$, in the set $S'$ 
is mapped to a substitute key by a nonce
pseudo-random hash function, $h_r$,
where $r$ is a random number chosen at the time Alice asks Bob to build (or
rebuild) his dictionary.
In addition, each value $v$ is encrypted as $E_K(v)$,
with a secret key, $K$, known only to Alice.
Thus, each item $(k,v)$ in $S'$ is stored by Bob as 
the key-value pair $(h_r(k),E_K(v))$.

To perform an access, either for a \textsf{get}$(k)$ or \textsf{put}$(k,v)$,
Alice first performs a lookup for $k$ in $C_0$, using its technology
for achieving obliviousness. If she 
does not find an item with key $k$ (as she won't initially), then she
requests the item from Bob by issuing a 
request, \textsf{get}$(h_r(k))$, to him.
Note that, since $k$ is a key in $S$, and it is not in Alice's cache,
$h_r(k)$ is a key in $S'$, by the fact that we are constructing 
a miss-intolerant OS solution.
Thus, there will be an item returned from this request.
From this returned item, $(h_r(k),E_K(v))$,
Alice decrypts the value, $v$, and stores the item $(k,v)$ in
$C_0$, possibly changing $v$ if she is performing a \textsf{put} operation.
Then she asks Bob to remove the item with key $h_r(k)$ from $S'$.

If, on the other hand,
in performing an access for a key $k$, Alice finds a matching item for $k$ 
in $C_0$, then she uses
that item and she issues a 
dummy request to Bob by asking him to perform a \textsf{get}$(h_r(-j))$
operation, where $j$ is a counter she keeps in her local memory for the next 
dummy key.
In this case, she inserts this dummy item into $C_0$ and 
she asks Bob to remove the item with key $h_r(-j)$ from $S'$.
Therefore, from Bob's perspective, Alice is always requesting a random key
for an item in $S'$ and then immediately removing that item.
Indeed, her behavior is always that of doing a \textsf{get} from $C_0$, 
a \textsf{get} from $C_1$, a \textsf{remove} from $C_1$, and then
a \textsf{put} in $C_0$.

After Alice has performed $M$ accesses, $C_0$ will be holding
$M$ items, which is its capacity.
So she pauses her performance of accesses at this time and enters a
\emph{rebuilding} phase.
In this phase, she rebuilds a new version of the dictionary
that is being maintained by Bob.

The new set to be maintained by Bob is the current $S'$ unioned with 
the items in $C_0$ (including the dummy items).
So Alice resets her counter, $j$, back to $1$.
She then performs an oblivious shuffle of the set $S''=C_0\cup S'$.
This oblivious shuffle is performed either with an
external-memory sorting algorithm~\cite{gm-paodor-11} or with the buffer-shuffle
method described above, depending, respectively, on whether Alice desires
perfect obscurity or if she can tolerate a small amount of information
leakage, as quantified above.
Finally, after this random shuffle completes,
Alice chooses a new random nonce, $r$,
for her pseudo-random function, $h_r$.
She then makes one more pass over the set of items (which are masked and
encrypted) that are now stored by Bob (using
\textsf{getRange} operations as in the buffer-shuffle method), and she maps
each item $(k,v)$ to the pair $(h_r(k),E_K(v))$ and asks Bob to store this
item in his memory.
This begins a new ``epoch'' for Alice to then use for the next $M$
accesses that she needs to make.

Let us consider an amortized analysis of this solution.
For the sake of amortization, we charge each of the previous $M$
accesses for the effort in performing a rebuild.
Since such a rebuild takes $O(N/M)$ I/Os, provided $M\ge N^{1/c}$,
for some constant $c\ge 2$, this means we will
charge $O(N/M^2)$ I/Os to each of these previous accesses.
Thus, we have the following.

\begin{lemma}
\label{lem:square}
Suppose we are
given a miss-tolerant OS solution, $D(N)$, which achieves $O(1)$ amortized
I/Os per access for messages of up to size $M$,
when applied to a set of size $N/M$.
Then we can use this as a component, $D(N/M)$, of a 
miss-intolerant OS solution, $B(N)$,
that achieves $O(1)$ amortized I/Os per access for messages of size up
to $M\ge N^{1/c}$, for some constant $c\ge2$.
The private memory required for this solution is $O(M)$.
\end{lemma}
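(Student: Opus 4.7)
The plan is to verify three properties of the construction $B(N)$ just described: correctness as a miss-intolerant OS scheme, obliviousness of Bob's view, and the $O(1)$ amortized I/O bound, together with the $O(M)$ private-memory bound. Correctness is immediate from the case analysis already given: whenever Alice performs a \textsf{get} or \textsf{put} for a key $k \in S$, either $k$ has been pulled into $C_0$ (and is served there using the assumed miss-tolerant $D$) or it still lives in $C_1$ under the label $h_r(k)$, so the request to $C_1$ always hits and no \textsf{null} ever reaches the component that is not prepared for it.

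For obliviousness I would focus on a single epoch of $M$ accesses followed by one rebuild. In every access, Bob observes a fixed pattern: a \textsf{get}/\textsf{put} pair on $C_0$ whose obliviousness is inherited from $D(N/M)$, followed by exactly one \textsf{get} and one \textsf{remove} on $C_1$ under some key in the image of $h_r$. The critical observation is that no underlying key is ever requested from $C_1$ twice within one epoch: a key $k$ already cached in $C_0$ is served there and a fresh dummy label $h_r(-j)$ is sent to Bob instead, while genuinely fetched keys are removed from $C_1$ upon retrieval, and the dummy counter $j$ is strictly increasing. Hence Bob's view of $C_1$ is $M$ distinct, pseudorandom labels whose joint distribution is independent of Alice's actual access sequence. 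At the rebuild, an oblivious sort or buffer shuffle of $S'' = C_0 \cup S'$ followed by a fresh nonce $r'$ for $h_{r'}$ re-randomizes the encoding, so the next epoch begins in a state Bob cannot correlate with the past; chaining these epoch-level arguments yields indistinguishability of any two equal-length access sequences over a set of size at most $N$, up to the negligible advantage afforded by the pseudorandomness of $h_r$ and the probabilistic encryption $E_K$.

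The cost analysis separates access cost from rebuild cost. Each access performs $O(1)$ raw I/Os on $C_1$ and invokes a constant number of $C_0$ operations, each costing $O(1)$ amortized I/Os by hypothesis on $D(N/M)$, which applies because $|C_0|$ is bounded by $N/M$ in the square-root regime $M \ge N^{1/2}$. The rebuild touches the $N + M$ items of $S''$ and, using either external-memory oblivious sorting or the buffer shuffle of Section~\ref{sec:buffer-shuffle} followed by a linear rescan for re-keying and re-encryption, finishes in $O(N/M)$ I/Os; amortized over the $M$ accesses of the epoch this contributes $O(N/M^2) = O(1)$ I/Os per access whenever $M^2 \ge N$. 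The memory bound follows because $C_0$ occupies $O(M)$ space by the inductive hypothesis, the shuffle subroutine uses $O(M)$ working memory, and the counter $j$ together with the secrets $r$ and $K$ occupies only $O(1)$ words. I expect the chief difficulty to be the obliviousness step, specifically ruling out subtle leakage through the timing of dummy versus real requests (handled by forcing every access to issue exactly one $C_1$ pair regardless of whether $C_0$ hit) and formally reducing cross-epoch independence to the pseudorandomness of $h_r$ combined with the re-shuffling guarantee from Section~\ref{sec:buffer-shuffle}.
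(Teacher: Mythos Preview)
Your correctness and obliviousness arguments are essentially those of the paper (indeed somewhat more carefully spelled out), so that part is fine. The gap is in the amortized I/O analysis: you establish the $O(1)$ bound only under the additional hypothesis $M^2 \ge N$, i.e.\ $c=2$, whereas the lemma is stated for every constant $c \ge 2$.

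The paper's proof obtains the general bound by taking the epoch length---the number of accesses between rebuilds---to be $N/M$ rather than $M$: the rebuild costs $O(N/M)$ I/Os and is amortized over $N/M$ preceding accesses, giving $O(1)$ per access directly, with no side condition relating $M^2$ to $N$. Implicitly, for general $c$ the cache $C_0$ has capacity $N/M$ and is realized via the assumed $D(N/M)$, which itself resides at Bob and by hypothesis requires only $O(M)$ private memory at Alice. The construction text immediately preceding the lemma is really the $c=2$ specialization, where $N/M = M$ and the two accountings coincide; you have followed that text literally and therefore proved only the square-root case. To match the lemma as stated, replace $M$ by $N/M$ as both the epoch length and the size of $C_0$; then the rebuild charge per access is $O((N/M)/(N/M)) = O(1)$ for all $c \ge 2$, and the private-memory bound still comes out to $O(M)$ because $D(N/M)$ contributes $O(M)$ and the shuffle working space is $O(M)$.
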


\begin{proof}
The number of amortized I/Os will be $O(1)$ per access, from $D(N)$.
The total number of I/Os needed to do a rebuild is
$O(N/M)$, assuming $M$ is at least $N^{1/c}$, 
for some constant $c\ge 2$.
There will be $N/M$ 
items that are moved in this case, which is equal to the number 
of previous accesses; hence, the amortized number of I/Os 
will be $O(1)$ per access.
The performance bounds follow immediately from the above discussion and the
simple charging scheme we used for the sake of an amortized analysis.
For the proof of security, note that each access that Alice makes to the
dictionary $S'$ will either be for a real item or a dummy element.
Either way, Alice will make exactly $N/M$
requests before she rebuilds this dictionary stored with Bob.
Moreover, from the adversary's perspective, every request
is to an independent uniformly random key, which is then immediately removed
and never accessed again.
Therefore, the adversary cannot distinguish 
between actual requests and dummy requests. 
In addition, he cannot correlate 
any request from a previous epoch, since Alice randomly shuffles the set of
items and uses a new pseudo-random function with each epoch.
\end{proof}

By then choosing $M$ appropriately, we have the following.

\begin{theorem}
\label{thm:square}
The square-root solution achieves $O(1)$ amortized I/Os for each data access,
allowing a client, Alice, to obliviously store $N$ items 
in a miss-intolerant way
with an honest-but-curious server, Bob, using 
messages that are of size at most $M=N^{1/2}$
and local memory that is of size at least $2M$.
The probability that this simulation fails to be oblivious is exponentially
small for a polynomial-length access sequence, if oblivious sorting is used
for shuffling, and polynomially small if buffer-shuffling is used.
\end{theorem}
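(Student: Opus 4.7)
The plan is to derive the theorem as a direct specialization of Lemma~\ref{lem:square} by choosing $M=N^{1/2}$ and instantiating the miss-tolerant sub-dictionary $D(N/M)=D(N^{1/2})$ in the simplest possible way. Since Alice's private memory is of size at least $2M=2N^{1/2}$, she can hold a set of up to $N/M=N^{1/2}$ items entirely in her local memory and implement an ordinary in-memory dictionary (e.g., a hash table) on them. Lookups, insertions, and removals on such an in-memory dictionary are trivially miss-tolerant, incur zero I/Os per access, and leak nothing to Bob because they never touch his storage. This is exactly the hypothesis that Lemma~\ref{lem:square} requires, so plugging $D(N^{1/2})$ into the role of the cache $C_0$ immediately yields the $O(1)$ amortized I/O bound per access of the theorem.

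Next I would verify that the space and message-size budgets are respected. Alice uses $M=N^{1/2}$ words for $C_0$ plus a constant number of message buffers of size $M$; the total local footprint is $\Theta(M)=\Theta(N^{1/2})\le 2M$, as required. Bob stores $N+M$ pairs in $C_1$, and every individual I/O during either an access (a single \textsf{get}/\textsf{put}/\textsf{remove}) or a rebuild (a scan or shuffle that streams $M$ items at a time) respects the $M$-item message bound. The amortized accounting already appears in Lemma~\ref{lem:square}: a rebuild costs $O(N/M)=O(N^{1/2})$ I/Os and is amortized across the $M=N^{1/2}$ accesses that filled $C_0$, so each access pays $O(1)$ for rebuilding in addition to its $O(1)$ access I/Os.

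The substantive part of the argument is the security bound, and this is where I would spend most of the effort. Within one epoch, the security analysis of Lemma~\ref{lem:square} already shows that every request Alice issues to Bob is to a fresh pseudorandom key that is then immediately deleted, so accesses within an epoch are indistinguishable from independent uniform samples conditioned on the hash $h_r$ behaving injectively on the $N+M$ keys in play. Across epochs, indistinguishability relies on the shuffle re-randomizing the correspondence between old and new keys before Alice picks a new nonce $r$. If oblivious sorting is used, the permutation is perfectly uniform and data-oblivious, so the only failure mode is a collision of $h_r$ on at most $N+M$ inputs, whose probability over a polynomial number $L$ of accesses is exponentially small in the output length of $h_r$ by a union bound. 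If buffer-shuffling is used, I would invoke the buffer-shuffle analysis of Section~\ref{sec:buffer-shuffle}: each shuffle fails to produce the desired $1/N+o(1/N)$ tracking bound with only inverse-polynomial probability (and this probability can be driven below $1/N^{\alpha+1}$ by a constant number of additional passes).

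The main obstacle, then, is the union bound over epochs: a polynomial-length access sequence of length $L$ triggers at most $L/M$ rebuilds, and each rebuild contributes both a potential shuffle failure and a potential hash-collision event. The key observation I would use is that the per-epoch failure probabilities are independently controllable by tuning the pass constant in the buffer shuffle (or by relying on the zero failure probability of oblivious sorting apart from hash collisions), so the overall failure probability over the sequence remains exponentially small in the sorting case and at most $L/M\cdot N^{-\alpha-1}=o(N^{-\alpha})$ in the buffer-shuffle case, matching the correlation-hardness threshold promised in the theorem.
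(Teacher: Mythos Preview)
Your proposal is correct and follows essentially the same route as the paper: instantiate Lemma~\ref{lem:square} with $M=N^{1/2}$ and realize the miss-tolerant cache $D(N/M)$ as an ordinary dictionary held entirely in Alice's private memory. You supply considerably more detail on the security bound (the per-epoch analysis and the union bound over rebuilds) than the paper's two-sentence proof does, but this is elaboration of the same argument rather than a different approach.
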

\begin{proof}
Plugging $M=N^{1/2}$ into Lemma~\ref{lem:square} gives us the complexity bound. 
The obliviousness follows from the fact that if she has an internal memory
of size $M=N^{1/2}$, then Alice can easily implement a
miss-tolerant OS solution in her internal memory, which achieves the
conditions of the $D(M)$ solution
needed for the cache $C_0$.
\end{proof}

Note that the constant factor in the amortized I/O overhead
in the square-root solution is quite
small. 


Note, in addition, that by the obliviousness definition in the OS model, it
does not matter how many accesses Alice makes to the solution, $B(N)$,
provided that her number of accesses are not self-revealing of 
her data items themselves.\footnote{An access sequence would be
	self-revealing, for example, if Alice reads a value and then performs a
	number of accesses equal to this value.}

\section{Miss-Tolerance} \label{sec:miss-tolerance}
An important functionality that is lacking from the square-root solution is
that it does not allow for accesses to items that are not in the set~$S$.
That is, it is a miss-intolerant OS solution.
Nevertheless, we can leverage the square-root solution to allow for such
accesses in an oblivious way, by using a hashing scheme.


\subsection{Review of Cuckoo Hashing}
The main idea behind this extension is to use a miss-intolerant
solution to obliviously implement a \emph{cuckoo hashing}
scheme~\cite{pr-ch-04}.  In cuckoo hashing, we have two hash tables $T_1$
and $T_2$ and two associated pseudo-random hash functions, $f_1$
and~$f_2$.  An item $(k,v)$ is stored at $T_1[f_1(k)]$ or
$T_2[f_2(k)]$.  When inserting item $(k,v)$, we add it
to~$T_1[f_1(k)]$.  If that cell is occupied by another item,
$(\hat{k},\hat{v})$, we evict that item and place it in~$T_2[f_2(\hat{k})]$.  Again,
we may need to evict an item.  This sequence of evictions continues
until we put an item into a previously-empty cell or we detect an
infinite loop (in which case we rehash all the items).  Cuckoo hashing
achieves $O(1)$ expected time for all operations with high
probability.  This probability can be boosted even higher to $1-1/n^s$
by using a small cache, known as a \emph{stash}~\cite{kmw-chs-09}, of size
$s$ to hold items that would have otherwise caused infinite insertion
loops.  With some additional effort (e.g., see~\cite{arbitman2009amortized}), 
cuckoo
hashing can be de-amortized to achieve $O(1)$ memory
accesses, with very high probability, for insert, remove, and lookup
operations.

In most real-world OS solutions, standard cuckoo hashing should
suffice for our purposes. But, to avoid inadvertent data leakage and
ensure high-probability performance bounds, let us assume we will be
using de-amortized cuckoo hashing.

\subsection{Implementing Cuckoo Hashing with a Miss-Intolerant OS Solution}
Let us assume we have a miss-intolerant solution, $B(N)$, to 
the OS problem, which achieves a constant I/O complexity for accesses, using
messages of size~$M$.

A standard or de-amortized cuckoo hashing scheme provides an interface for
performing \textsf{get}$(k)$ and \textsf{put}$(k,v)$ operations, so that
\textsf{get} operations are miss-tolerant.
These operations are implemented using pseudo-random hash functions in the
random access memory (RAM) model, i.e., using a collection of memory cells,
where each such cell is uniquely identified with an index~$i$.
To implement such a scheme using solution $B(N)$, 
we simulate a read of cell $i$ 
with \textsf{get}$(i)$ operation and we simulate a write of $x$ to cell $i$
with \textsf{put}$(i,x)$.
Thus, each access using $B(N)$ is guaranteed to return an item, namely a cell
$(i,x)$ in the memory (tables and variables) 
used to implement the cuckoo-hashing scheme. 
Thus, whenever we access a cell with index
$i$, we actually perform a request for (an encryption of) this cell's
contents using the obliviousness mechanism provided by~$B(N)$.

That is,
to implement a standard or de-amortized cuckoo hashing scheme using $B(N)$, 
we assume now that every (non-dummy) key in Alice's simulation 
is an index in the memory used to implement the hashing scheme.
Thus, each access is guaranteed to return an item. 
Moreover,
because inserts, removals, and lookups achieve a constant number of memory
accesses, with very high probability, in a de-amortized cuckoo-hashing 
scheme (or with constant expected-time performance in a standard
cuckoo hashing scheme),
then each operation in a simulation of de-amortized cuckoo hashing
in $B(N)$ involves
a constant number of accesses with very high probability.
Therefore, using a de-amortized cuckoo-hashing scheme, we have the following result.

\begin{theorem} \label{thm:cuckoo}%
  Given a miss-intolerant OS solution, $B(N)$, that achieves $O(1)$
  amortized I/O performance with messages of size $M$ and achieves
  confidentiality and hardness of correlation,
we can implement a miss-tolerant solution, $D(N)$, that
  achieves $O(1)$ amortized I/O performance and 
also achieves confidentiality and hardness of correlation.
\end{theorem}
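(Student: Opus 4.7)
The plan is to carry out the reduction described in the discussion in a rigorous fashion: use a de-amortized cuckoo hashing scheme (with stash) as the outer ``logical'' data structure, and replace each of its memory cells with an item stored in the miss-intolerant solution $B(N)$, accessed by cell index. Every simulated read of cell $i$ becomes \textsf{get}$(i)$ to $B(N)$, and every simulated write becomes \textsf{put}$(i,x)$. Because the set of indices is fixed (the tables $T_1,T_2$ and stash have known sizes), every such call addresses a key that actually lives in $B(N)$, so the miss-intolerance restriction is satisfied even when the outer \textsf{get}$(k)$ issued by the user is for a key $k$ that is absent from $S$.

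Next I would establish the I/O bound. A de-amortized cuckoo hashing scheme with stash performs a bounded number $T$ of cell accesses per \textsf{get}, \textsf{put}, or \textsf{remove}, with probability at least $1-1/N^s$ for a constant $s$ chosen by the stash size~\cite{kmw-chs-09,arbitman2009amortized}. Each cell access is served by $B(N)$ in $O(1)$ amortized I/Os of size $M$. Composing these two $O(1)$ factors gives the claimed $O(1)$ amortized I/O overhead for $D(N)$, and a standard argument absorbs the vanishing-probability failure events into the overall amortized cost (or, equivalently, into the negligible security slack).

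For the security argument I would argue confidentiality and hardness of correlation separately. Confidentiality is inherited directly: Alice encrypts every cuckoo-table cell before handing it to the lower layer, and $B(N)$ itself encrypts and randomizes nonces, so no plaintext key or value is revealed to Bob. For hardness of correlation, the essential observation is that, at the interface to $B(N)$, the sequence of index-labeled requests produced by the cuckoo hashing simulator is, up to a fixed constant number of requests per outer operation, the same shape whether the queried key is present or absent: in both cases Alice probes $T_1[f_1(k)]$, then $T_2[f_2(k)]$, and possibly scans the constant-sized stash, following the same deterministic template. Moreover, $B(N)$ obliviously remaps these indices through its nonce-based pseudo-random function and reshuffles them between epochs, so two access sequences $\sigma_1,\sigma_2$ of equal length and set size produce, through the composed simulator, two sequences of $B(N)$-level requests that are indistinguishable under $B(N)$'s own hardness of correlation, except with probability at most the sum of $B(N)$'s indistinguishability advantage and the cuckoo failure probability $1/N^s$, both of which are negligible (or inverse-polynomial) in $N$.

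The main obstacle is making precise that the cuckoo scheme's access template is data-oblivious at the level seen by $B(N)$; in particular that insertions which trigger evictions, and the de-amortization queue that spreads an ongoing eviction chain across subsequent operations, do not create an access pattern whose \emph{length} or \emph{ordering} depends on whether recent queries were hits or misses. I would address this by padding every outer operation to exactly $T$ cell accesses (issuing dummy-cell reads when an eviction chain finishes early, using the same dummy-index mechanism already used inside $B(N)$), so that the simulator emits a fixed-length, index-only request pattern per outer call. Once this normalization is in place, hardness of correlation for $D(N)$ reduces cleanly to hardness of correlation for $B(N)$ together with the standard cuckoo hashing failure bound, and the theorem follows.
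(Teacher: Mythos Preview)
Your proposal is correct and follows essentially the same construction as the paper: layer a de-amortized cuckoo hash table on top of $B(N)$ by treating each table/stash cell as an item keyed by its index, so that every \textsf{get}/\textsf{put} issued to $B(N)$ is for a key that is guaranteed to be present, and inherit the $O(1)$ amortized I/O bound from the worst-case constant access count of the de-amortized scheme.

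Where you go beyond the paper is in the security argument. The paper's discussion preceding the theorem simply invokes the constant-access property of de-amortized cuckoo hashing and the obliviousness of $B(N)$, without spelling out that the \emph{number} and \emph{pattern} of cell accesses per outer operation must themselves be data-independent. Your explicit padding of every outer operation to exactly $T$ cell accesses (with dummy reads when an eviction chain or lookup terminates early) is the right way to close this, and it is a detail the paper leaves implicit. So your route is the paper's route, made more careful on exactly the point where the paper is terse.
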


A standard cuckoo-hashing scheme yields instead the following result.

\begin{theorem}
  Given a miss-intolerant OS solution, $B(N)$, that achieves expected
  $O(1)$ amortized I/O performance, with messages of size $M$, we can
  implement a miss-tolerant solution, $D(N)$, that achieves $O(1)$
  expected amortized I/O performance.
\end{theorem}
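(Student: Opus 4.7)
My plan is to mirror the proof of Theorem~\ref{thm:cuckoo} almost verbatim, replacing the de-amortized cuckoo hashing scheme with a standard one and tracking ``with high probability'' bounds as ``in expectation.''

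First, I would set up the simulation exactly as before: the client maintains two tables $T_1$ and $T_2$ together with a (constant-size) stash, along with two pseudo-random hash functions $f_1$ and $f_2$. Each memory cell of this cuckoo data structure, identified by an index $i$, is mapped to a key used inside the miss-intolerant OS instance $B(N)$. A read of cell $i$ becomes \textsf{get}$(i)$ on $B(N)$, and a write of $x$ to cell $i$ becomes \textsf{put}$(i,x)$ on $B(N)$. Because every cell touched during a cuckoo insertion, lookup, or removal has a legitimate index, every such access is guaranteed to hit an item, so the miss-intolerant interface of $B(N)$ is respected. As in Theorem~\ref{thm:cuckoo}, I would have lookups always probe both $T_1[f_1(k)]$ and $T_2[f_2(k)]$ (and scan the stash) so that no information leaks through the presence or absence of the key at the cuckoo layer.

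Next I would invoke the standard analysis of cuckoo hashing~\cite{pr-ch-04,kmw-chs-09}: each \textsf{get}, \textsf{put}, and \textsf{remove} touches $O(1)$ cells in expectation. Composing this with the hypothesis that each such cell access costs $O(1)$ amortized I/Os in expectation under $B(N)$, linearity of expectation yields $O(1)$ expected amortized I/Os per $D(N)$ operation. Confidentiality and hardness of correlation carry over directly from $B(N)$: since every simulated cell access is funneled through $B(N)$, Bob sees only $B(N)$'s obfuscated I/O stream, which by hypothesis already achieves the required security properties.

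The main obstacle I anticipate is subtle information leakage from the variable number of cell accesses in standard (as opposed to de-amortized) cuckoo hashing: even if each individual access is perfectly obfuscated by $B(N)$, the \emph{count} of accesses per operation could, in principle, leak whether a given \textsf{put} triggered a long eviction chain. I would address this the same way it is handled in the de-amortized version's analysis, namely by padding each operation up to a fixed threshold of dummy cell accesses (e.g., a constant or $\Theta(\log N)$, as appropriate) and handling the rare overflow via the stash; because the expected chain length is $O(1)$, this padding keeps the expected amortized I/O cost at $O(1)$ while making the number of $B(N)$ operations per request data-independent, so that confidentiality and hardness of correlation transfer cleanly from $B(N)$ to $D(N)$.
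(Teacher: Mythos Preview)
Your core construction and complexity argument match the paper's: implement standard cuckoo hashing on top of $B(N)$ by treating each cell index as a key, so every access hits a real item, and then compose the expected $O(1)$ cell accesses per cuckoo operation with the $O(1)$ amortized I/Os per access in $B(N)$. The paper gives no separate proof for this theorem; it simply states that replacing the de-amortized scheme in the preceding construction by standard cuckoo hashing yields the expected-time bound.

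Where you diverge is in trying to establish confidentiality and hardness of correlation. Read the theorem statement again: unlike Theorem~\ref{thm:cuckoo}, this version makes \emph{no} security claim. The paper deliberately drops confidentiality and hardness of correlation here, precisely because standard cuckoo hashing has data-dependent eviction-chain lengths---the very leakage channel you identified. So your last paragraph is not needed for this theorem, and your proposed fix is shaky in any case: padding every operation to a fixed constant number of cell accesses does not let long eviction chains complete (the stash is for cycles, not for truncated chains), while padding to $\Theta(\log N)$ accesses would destroy the $O(1)$ expected bound you are trying to prove. The clean resolution is simply to omit the security discussion, as the paper does, and let Theorem~\ref{thm:cuckoo} handle the case where obliviousness is required.
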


Our use of cuckoo hashing in the above construction
is quite different, by the way, than previous 
uses of cuckoo-hashing for oblivious 
RAM simulation~\cite{gm-paodor-11,shornbs-klo-11,futurepaper}.
In these other papers, the server, Bob, gets to see the
actual indexes and memory addresses used in the cuckoo hashing scheme. 
Thus, the adversary in these other schemes can see where items 
are placed in cuckoo tables (unless their construction is itself
oblivious) and when and where they are 
removed; hence, special care must be taken to construct 
and use the cuckoo tables in an oblivious way.
In our scheme, the locations in the cuckoo-hashing scheme
are instead obfuscated because they are themselves built on
top of an OS solution. 

Also, in previous schemes, cuckoo tables were chosen for the reason
that, once items are inserted, their locations are determined by
pseudo-random functions.  Here, cuckoo tables are used only for the
fact that they have constant-time insert, remove, and lookup
operations, which holds with very high probability for de-amortized
cuckoo tables and as an expected-time bound for standard cuckoo tables.


\section{An Inductive Solution}
\label{sec:induction}
The miss-tolerant square-root method given in Section~\ref{sec:miss-tolerance}
provides a solution of the oblivious
storage problem with amortized constant I/O performance
for each access, but 
requires  Alice to have a local memory of size $2N^{1/2}$
and messages to 
be of size $N^{1/2}$ during the rebuilding phase (although 
constant-size messages are exchanged during the access phase).
In this section, we show how to recursively apply this method to create a
more efficient solution.

For an integer $c \geq 2$, let $D_c(N)$ denote a miss-intolerant
oblivious storage solution that has the following properties:
\begin{enumerate}
\item
It supports a dictionary of $N$ items.
\item
It requires local memory of size $cN^{1/c}$ at the client.
\item
It uses messages of size $N^{1/c}$.
\item It executes $O(1)$ amortized I/Os per operation (each
  \textsf{get} or \textsf{put}), where the constant factor in this bound
  depends on the constant $c$.
\item It achieves confidentiality and hardness of correlation.
\end{enumerate}
Note that using this notation, the square-root method derived in
Section~\ref{sec:miss-tolerance} using cuckoo hashing is a $D_2(N)$
oblivious storage solution.

\subsection{The Inductive Construction}
For our inductive construction, for $c\ge 3$,
we assume the existence of a 
oblivious storage solution $D_{c-1}(N')$.
We can use this
to build a miss-tolerant oblivious storage solution, $D_c(N)$, using
message size $M=N^{1/c}$ as follows:
\begin{enumerate}
\item
Use the construction of Lemma~\ref{lem:square}
to build a miss-intolerant OS solution, $B_c(N)$, from $D_{c-1}(N/M)$.
This solution will have $O(1)$ amortized I/Os per access,
with very high probability,
using messages of size $M$ and private memory requirement
of size $O(M)$ since $D_{c-1}(N/M)$ uses memory of size
\[
(N/M)^{\frac{1}{c-1}} = N^{\frac{1-1/c}{c-1}} = N^{1/c} = M.
\]
\item
Use the construction of Theorem~\ref{thm:cuckoo} to take 
the miss-intolerant solution, $B_c(N)$, and convert
it to a miss-tolerant solution.
This solution uses an $O(1)$ amortized
number of I/Os, with high probability, using messages of size $M=N^{1/c}$,
and it has the performance bounds necessary to be denoted as $D_c(N)$.
\end{enumerate}

An intuition of our construction is as follows.
We number each level of the construction such that $c$ is the top most
and $2$ is the lowest level, hence there are $c-1$ levels.
The top level, $c$, consists
of  the main memory $A_c$ of size $O(N)$ and uses the rest of the construction
as a cache for $O(N/M)$ items which we
referred to as $B_c(N)$. This cache is the beginning of our
inductive construction and, hence, itself is an OS over $O(N/M)$ items.
The inductive construction continues such that level $i$ contains a
miss-tolerant data structure $A_i$ and levels $(i-1), \ldots, 2$
are used as a cache of level $i$.
The construction terminates when we reach level 2 since size of the
cache at level 2 is equal to the message size
$M$ which Alice can request using a single access or store in her own memory.
We give an illustration of our construction for $c=2$ and $c=3$ in
Figures~\ref{fig:ind_construct_k2} and~\ref{fig:ind_construct_k3}.

\ifFull
\begin{figure}[t]
\else
\begin{figure}[ht!]
\fi
\begin{center}
\includegraphics[scale=0.45]{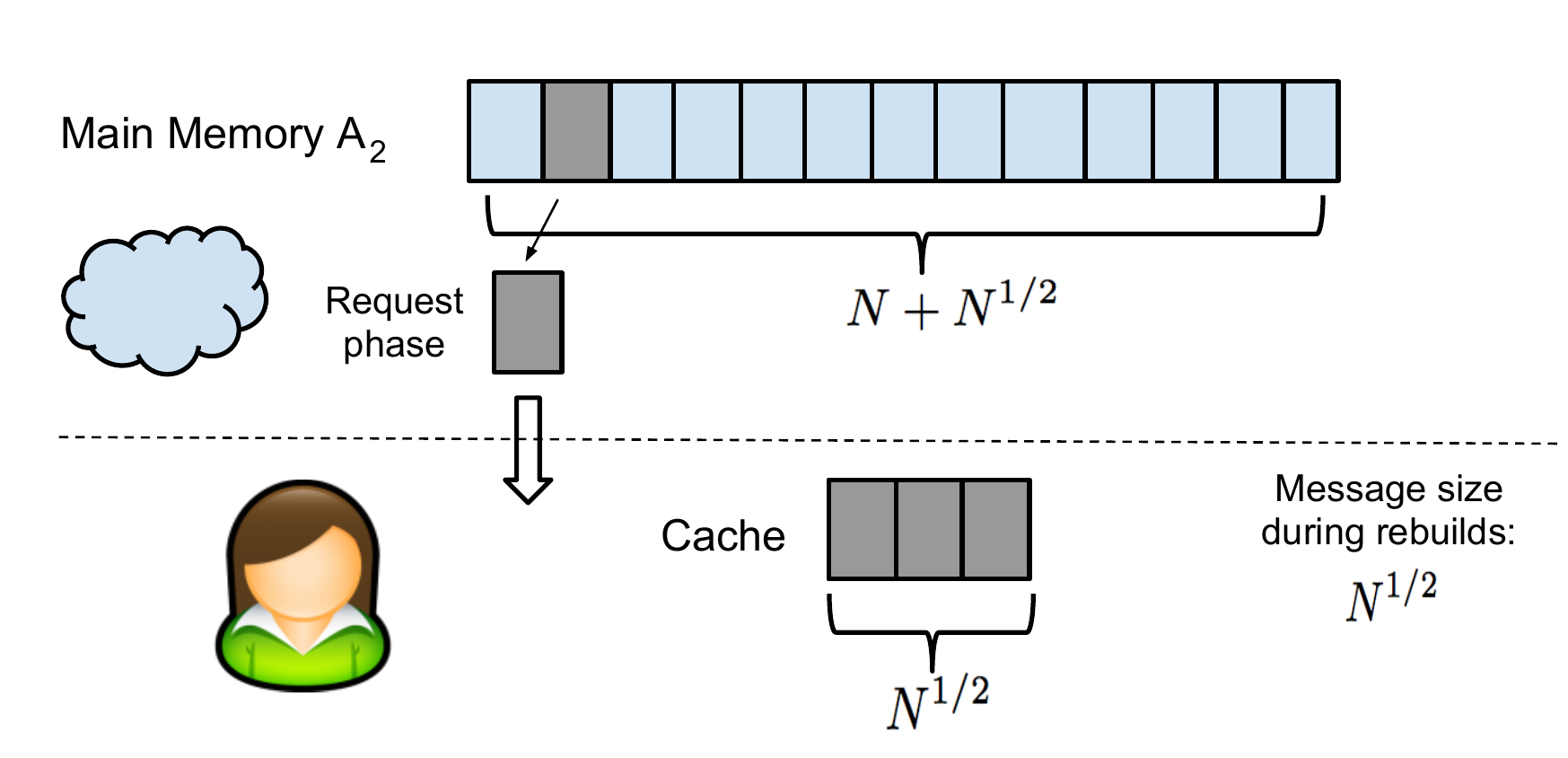}
\caption{Memory layout for $c=2$.
The locations accessed by the user are visualized as gray-filled rectangles.}
\label{fig:ind_construct_k2}
\end{center}
\end{figure}

\ifFull
\begin{figure}[t]
\else
\begin{figure}[ht!]
\fi
\begin{center}
\includegraphics[scale=0.45]{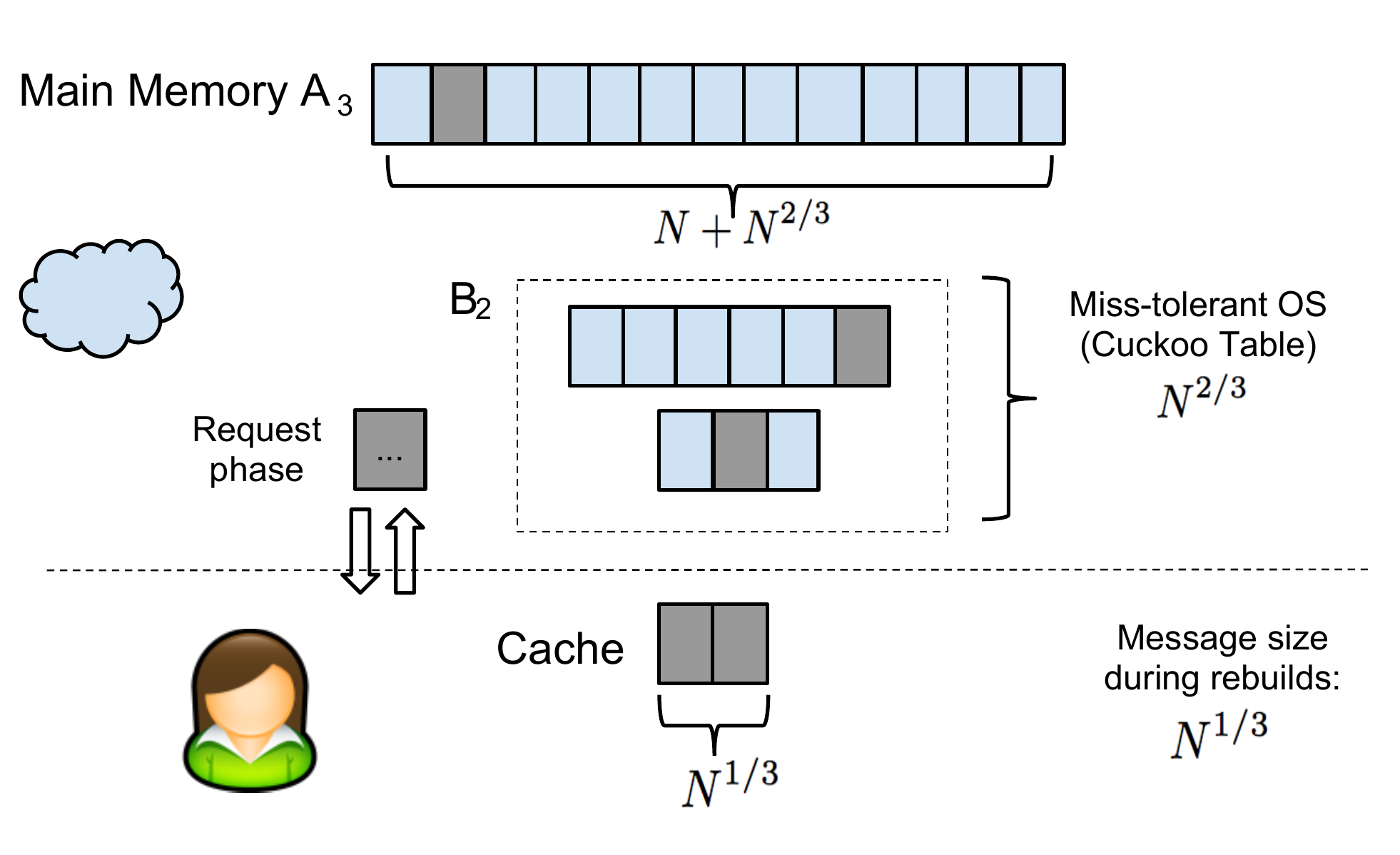}
\caption{Memory layout for $c=3$.}
\label{fig:ind_construct_k3}
\end{center}
\end{figure}

\begin{theorem}
The above construction results in an oblivious storage solution,
$D_c(N)$, that is miss-intolerant,
supports a dictionary of $N$ items,
requires client-side local memory of size at least $cN^{1/c}$,
uses messages of size $N^{1/c}$,
achieves an amortized $O(1)$ number of I/Os for each
\textsf{get} and \textsf{put} operation, 
where the constant factor in this bound depends on 
the constant $c\ge 2$. In addition, this method achieves confidentiality 
and hardness of correlation.
\end{theorem}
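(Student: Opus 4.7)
The plan is to argue by induction on $c \ge 2$, with the constructions in Sections~\ref{sec:square} and~\ref{sec:miss-tolerance} serving as the base case $c=2$. For $c=2$, Alice's local memory of size $2N^{1/2}$ trivially hosts an in-memory miss-tolerant dictionary (at zero I/O), which plays the role of the inner cache in Lemma~\ref{lem:square}; that lemma then yields the miss-intolerant square-root solution $B_2(N)$, and Theorem~\ref{thm:cuckoo} promotes it to the miss-tolerant $D_2(N)$ satisfying all five required properties.

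For the inductive step, assume $D_{c-1}(N')$ exists with the five properties for every admissible $N'$, and set $M = N^{1/c}$, $N' = N/M = N^{(c-1)/c}$. I would first verify that the parameters compose: the inner solution's message size is $(N')^{1/(c-1)} = N^{1/c} = M$, as required by Lemma~\ref{lem:square}; its client-side memory is $(c-1)(N')^{1/(c-1)} = (c-1)N^{1/c}$; and the outer square-root step needs an additional $O(M) = O(N^{1/c})$ buffer at the client, bringing the total to at most $cN^{1/c}$. Plugging $D_{c-1}(N')$ into the role of the inner cache in Lemma~\ref{lem:square} yields a miss-intolerant $B_c(N)$ with $O(1)$ amortized I/O on messages of size $M$; feeding $B_c(N)$ into Theorem~\ref{thm:cuckoo} then delivers the miss-tolerant $D_c(N)$ at $O(1)$ amortized cost, because the (de-amortized) cuckoo simulation translates each external call into a constant-length sequence of calls to $B_c(N)$.

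The main obstacle I expect is controlling the hidden constant. Each inductive step multiplies it by a factor from Lemma~\ref{lem:square} (the rebuild-amortization ratio) and a factor from Theorem~\ref{thm:cuckoo} (the worst-case cuckoo access length), so unrolling the recursion produces a constant that grows exponentially in $c-1$. I would therefore state explicitly that $c$ is held fixed, which matches the theorem's qualifier that the constant depends on $c$; this is also why $c=2,3,4$ are the practically relevant regimes.

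For security, confidentiality follows immediately by induction, since every item placed on any external level is freshly re-encrypted under Alice's secret key and retagged with a pseudorandom key whose nonce is refreshed at every rebuild. For hardness of correlation I would give a level-by-level hybrid: the transcript visible to Bob at level $c$ is the interleaving of (i) outer accesses to Bob's main store of $N+M$ items, which by the security argument inside the proof of Lemma~\ref{lem:square} look like independent uniform keys each used exactly once per epoch, and (ii) calls into $D_{c-1}(N')$, whose transcript is hard to correlate by the inductive hypothesis. Because the nonces at each level are drawn independently, the two transcripts are independent, and a standard hybrid that swaps one access sequence for another a single level at a time loses only a negligible (or $1/N^\alpha$) distinguishing advantage per level, summing to at most a constant-in-$c$ multiple of that loss across the $c-1$ levels.
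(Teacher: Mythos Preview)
Your proposal is correct and follows exactly the approach the paper intends: the paper gives no explicit proof of this theorem, stating it immediately after the two-step inductive construction (apply Lemma~\ref{lem:square} with $D_{c-1}(N/M)$ as the inner cache, then apply Theorem~\ref{thm:cuckoo}), with the key parameter identity $(N/M)^{1/(c-1)}=N^{1/c}=M$ already verified inline in the construction. You have simply written out the induction that the paper leaves implicit, including the memory accounting $(c-1)N^{1/c}+O(N^{1/c})\le cN^{1/c}$, the observation that the hidden constant compounds across the $c-1$ levels, and the level-by-level security hybrid; none of this deviates from the paper's route.
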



\section{Performance}
\label{sec:performance}

We have built a system prototype of our oblivious storage method to
estimate the practical performance of our solution and compare it with
that of other OS solutions.  In our simulation, we record the number
of access operations to the storage server for every original data
request by the client.
Our prototype specifically simulates the use of Amazon S3 as the
provider of remote storage, based on their current~API.  In
particular, we make use of operations \textsf{get}, \textsf{put},
\textsf{copy} and \textsf{delete} in the Amazon S3 API.  Since Amazon
S3 does not support range queries, we substitute operation
\textsf{getRange}$(i_1,i_2,m)$ of our OS model with $m$ concurrent
\textsf{get} requests, which could be issued by parallel threads
running at the client to minimize latency.  Operation
\textsf{removeRange} is handled similarly with concurrent
\textsf{delete} operations.
We have run the simulation for two configurations of our OS solution,
$c=2$ and~$c=3$. We consider two item sizes, 1KB and 64KB.  The size
(number of items) of the messages exchanged by the client and server
is $M=N^{1/c}$ where $N$, the number of items in the outsourced data
set, varies from $10^4$ to~$10^6$.

\textbf{Storage overhead.} The overall storage space (no.\ of items) used
by our solution on the server is $N+2\sum_{i=1}^{c-2}N^{(c-i)/c}$,
i.e.  $N+N^{1/2}$ for $c=2$ and $N+2N^{2/3}$ for $c=3$. For $c=2$, our
method has storage overhead comparable to that of Boneh {\it et
  al.}~\cite{bmp-rosmor-11} and much smaller than the space used by
other approaches.

\textbf{Access overhead.}  In Table~\ref{tbl:access_overhead}, we show
the number of I/Os to the remote data repository during the oblivious
simulation of $N$ requests.  Recall that the number of I/Os is the
number of roundtrips the simulation makes.  Thus, the
\textsf{getRange} operation is counted as one I/O.  In the table,
column Minimum gives the number of I/Os performed by Alice to receive the
requested item. The remaining I/Os are performed for reshuffling.  For
$c=2$, this number is 2 since the client sends a \textsf{get} request
to either get an actual item or a dummy item, followed by a
\textsf{delete} request. For $c=3$, this number is slightly higher
since Alice needs to simulate accesses to a cuckoo table through an OS
interface.  We compare our I/O overhead and the total amount of data
transferred with that of Boneh {\it et al.}~\cite{bmp-rosmor-11}.
They also achieve $O(1)$ request overhead and exchange messages of
size $M=N^{1/2}$ with the server.   Our new buffer shuffle
algorithm makes our approach more efficient in terms of data transfer
and the number of operations the user makes to the server.

\textbf{Time overhead.} Given the trace of user's operations during
the simulation and empirical measurements of round trip times of
operations on the Amazon S3 system (see Table~\ref{tbl:amazon_ops}),
we estimate the access latency of our OS solutions in
Tables~\ref{tbl:times_cost1kb} and \ref{tbl:times_cost64kb} for 1KB
items and 64KB items, respectively.

\textbf{Cost overhead.}  Finally, we provide estimates of the monetary
cost of OS our solution in Table~\ref{tbl:times_cost1kb} using the
pricing scheme of Amazon S3 (see Table~\ref{tbl:amazon_ops} and
\url{http://aws.amazon.com/s3/pricing/}~\footnote{Accessed on
  9/21/2011}).
Since our results outperform other approaches in terms of number of
accesses to the server we expect that our monetary cost will be also lower.

\begin{table*}
\ifFull
\small
\fi
\begin{center}
\begin{tabular}{r|c|c}
& \multicolumn{2}{c}{Minimum/Amortized} \\
\hline
$N$ & $M=N^{1/2}$ & $M=N^{1/3}$\\
\hline
10,000 & 2/13 & 7/173 \\
100,000 & 2/13 & 7/330 \\
1,000,000 & 2/13 & 7/416  \\
\end{tabular}
\end{center}
\caption{Minimum and amortized number of I/Os to access one item in our OS solution for $c=2$ and $c=3$.   We simulate a sequence of $N$ accesses on a system that uses four passes of the buffer shuffle algorithm.}
\label{tbl:access_overhead}
\begin{center}
\begin{tabular}{r||c|c||c|c}
\multirow{3}{*}{$N$} &\multicolumn{4}{c}{I/O Overhead and Data Transferred (\#items)} \\
\cline{2-5}
& \multicolumn{2}{c||}{Boneh {\it et al.}~\cite{bmp-rosmor-11}} & \multicolumn{2}{c}{Our Method} \\
\cline{2-5}
& Minimum & Amortized & Minimum & Amortized \\
\hline
10,000 & 3/9 & 13/1.3$\times10^3$ & 2/1  & 13/1.1$\times10^3$\\
100,000& 3/10 &17/5.2$\times10^3$ & 2/1  & 13/3.5$\times10^3$\\
1,000,000  & 3/12 & 20/2$\times10^4$ & 2/1  & 13/1.1$\times10^4$ \\
\end{tabular}
\end{center}
\caption{Minimum and amortized number of I/Os and number of items transferred to access one item. We compare our OS solution for $c=2$ with that of~\cite{bmp-rosmor-11} on a data set with 1KB items. In both solutions the message size is  $M=N^{1/2}$ items.}
\begin{center}
\begin{tabular}{r|c|c|c}
\multirow{2}{*}{Operation} & \multirow{2}{*}{Price} &  \multicolumn{2}{c}{RTT (ms)}  \\
\cline{3-4}
& & 1KB & 64KB \\
\hline
Get & \$0.01/10,000req & 36 & 56 \\
Put & \$0.01/1,000req & 65 & 86 \\
Copy & free &  70 & 88 \\
Delete & free & 31 & 35 \\
\end{tabular}
\end{center}
\caption{Amazon S3's pricing scheme and empirical measurement of round-trip time (RTT) for an operation issued by a client in Providence, Rhode Island to the Amazon S3 service (average of 300 runs).}
\label{tbl:amazon_ops}
\begin{center}
\begin{tabular}{r|c|c|r|c|c|r}
\multirow{3}{*}{$N$}  & \multicolumn{3}{c|}{$M=N^{1/2}$} & \multicolumn{3}{c}{$M=N^{1/3}$} \\
\cline{2-7}
& \multicolumn{2}{c|}{Access Time} & \multirow{2}{*}{Total Cost} & \multicolumn{2}{c|}{Access Time} & \multirow{2}{*}{Total Cost}   \\
\cline{2-3} \cline{5-6}
& Minimum & Amortized & & Minimum & Amortized & \\
\hline
10,000 & 67ms & 500ms & \$55 & 400ms & 8s & \$177 \\
100,000 & 67ms & 500ms & \$1,744 & 400ms & 12s & \$7,262 \\
1,000,000 & 67ms & 500ms & \$55,066 & 400ms & 18s & \$170,646
\end{tabular}
\end{center}
\caption{Estimate of the access time per item and total monetary cost
for accessing $N$ items, each of size 1KB, stored on the Amazon S3 system using our OS method for $c=2$ and $c=3$.}
\label{tbl:times_cost1kb}
\begin{center}
\begin{tabular}{r|c|c|r|c|c|r}
\multirow{3}{*}{$N$}  & \multicolumn{3}{c|}{$M=N^{1/2}$} & \multicolumn{3}{c}{$M=N^{1/3}$} \\
\cline{2-7}
& \multicolumn{2}{c|}{Access Time} & \multirow{2}{*}{Total Cost} & \multicolumn{2}{c|}{Access Time} & \multirow{2}{*}{Total Cost}   \\
\cline{2-3} \cline{5-6}
& Minimum & Amortized & & Minimum & Amortized & \\
\hline
10,000 & 91ms & 800ms & \$55 & 500ms & 12s & \$177 \\
100,000 & 91ms & 800ms & \$1,744 & 500ms & 18s & \$7,262 \\
1,000,000 & 91ms & 800ms & \$55,066 & 500ms & 24s & \$170,646
\end{tabular}
\end{center}
\caption{Estimate of the access time per item and total monetary cost
for accessing $N$ items, each of size 64KB, stored on the Amazon S3 system using our OS method for $c=2$ and $c=3$.}
\label{tbl:times_cost64kb}
\end{table*}


\ifFull
\subsection*{Acknowledgments}
This research was supported in part by the National
  Science Foundation under grants
  0721491,
  0915922, 0953071, 0964473, 1011840, and 1012060,
  and by the Kanellakis Fellowship at
  Brown University.
\fi

\clearpage

\bibliographystyle{abbrv}
\bibliography{paper}

\end{document}